\renewcommand*{\p@subsection}{}
\renewcommand*{\p@subsubsection}{}
\newtheorem{thm}{Theorem}[section]
\theoremstyle{definition}
\theoremstyle{remark}
\numberwithin{equation}{section}
\begin{document}

\title{Maxwell's equations are universal\\ for locally conserved quantities}

\author{Lucas Burns}
\address{Chapman University}
\email{luburns@chapman.edu}
\date{\small May 30, 2019}

\begin{abstract}
A fundamental result of classical electromagnetism is that Maxwell's equations imply that electric charge  is  locally  conserved.  Here  we  show the converse: Local  charge  conservation  implies  the local existence of fields satisfying Maxwell's equations. This holds true for any conserved quantity satisfying a continuity equation. It is obtained by means of a strong form of the Poincar\'e lemma presented here that states: Divergence-free multivector fields locally possess curl-free antiderivatives on flat manifolds. The above converse is an application of this lemma in the case of divergence-free vector fields in spacetime. We also provide conditions under which the result generalizes to curved manifolds.
\end{abstract}

\maketitle

\section{Introduction}

Historically, Maxwell's equations of electromagnetism have taken various forms. They were first proposed by James Clerk Maxwell in 1865 as a set of twenty equations \cite{maxwell}. The very last of these, Maxwell called the equation of continuity, in analogy to the equation of mass continuity in hydrodynamics. In his original treatise, this was written as
\begin{equation}\label{eq:continuity-1864}
    \frac{de}{dt} + \frac{df}{dx} + \frac{dg}{dy} + \frac{dh}{dz} = 0.
\end{equation}
Today's standard treatments of electromagnetism tend to be expressed in vector calculus. For instance, in John D. Jackson's \emph{Classical Electrodynamics} (1998) or David J. Griffith's \emph{Introduction to Electrodynamics} (2012), you will find the familiar four equation expression for Maxwell's equations:
\begin{align}\label{eq:maxwell-vector}
    \begin{split}
        \vec \nabla \cdot \vec E &= \frac{1}{\epsilon_0} \rho\\
        \vec \nabla \times \vec B &= \mu_0 \vec J + \frac{1}{\mu_0 \epsilon_0} \frac{\partial \vec E}{\partial t}
    \end{split}
    \begin{split}
        \vec \nabla \cdot \vec B &= 0\\
        \vec \nabla \times \vec E &= - \frac{\partial \vec B}{\partial t},
    \end{split}
\end{align}
which, with some fiddling, can be shown to \emph{imply} the continuity equation
\begin{equation}\label{eq:continuity-vector}
    \frac{\partial \rho}{\partial t} + \vec \nabla \cdot \vec J = 0.
\end{equation}
The fact that the continuity equation follows from Maxwell's equations (in particular, from the inhomogenous equations in the left column) is a fundamental result of electrodynamics, and it is crucial to the theory, because it means that electric charge is locally conserved. This can be seen by re-expressing Equation~\ref{eq:continuity-vector} as
\begin{equation}
  \frac{\partial Q}{\partial t} = \frac{\partial}{\partial t} \int_\mathcal{V} \rho dV = - \int_{\partial\mathcal{V}} \vec J \cdot d \vec a,
\end{equation}
which states: The total electric charge $Q$ in a region of space $\mathcal{V}$ can only change if it flows through the boundary $\partial \mathcal{V}$ of that region.

The purpose of this paper is to demonstrate that Equation~\ref{eq:continuity-vector} also implies the local existence of fields satisfying Equations~\ref{eq:maxwell-vector}. The main mathematical result utilized in order to demonstrate this, which we call the strong Poincar\'e lemma, is a marriage of the Poincar\'e lemma of de Rham cohomology and the integral formula of geometric calculus. Though their connection appears simple in hindsight, the implications seem underexplored. In the language of geometric calculus, it states: Divergence-free multivector fields locally possess curl-free antiderivatives (or dually, curl-free multivector fields locally possess divergence-free antiderivatives), under certain conditions. Unlike the usual Poincar\'e lemma, the strong lemma is dependent on a metric. To my knowledge, this result has been shown once before in Ref. \cite{brackx}. We present a simplified derivation, present conditions under which this result remains valid on arbitrary manifolds, and demonstrate its useful application for conservation laws.

This lemma applied to the case of divergence-free vector fields in spacetime provides as an immediate consequence that Maxwell's equations are universal for locally conserved quantities. To my knowledge, the argument that Maxwell's equations can be obtained from the continuity equation has been made by two others \cite{heras, macdonald}. This paper reinforces these results by demonstrating that it holds without assumption of particular boundary conditions, clarifies the extent to which it is true in topologically non-trivial spacetimes, and provides conditions under which the result also holds in curved spacetimes. Moreover, the strong lemma yields what might be called generalized Maxwell equations from generalized conservation laws. For this reason, we emphasize that Maxwell's equations are not unique to electromagnetism and may be of use in the analysis of other locally conserved quantities. As a first example, we are investigating the use of this lemma to offer mathematical justification for an analogy to electromagnetism utilized in recent work on acoustic waves \cite{bliokh1, bliokh2, shi, burns}.

We begin with an overview of the Poincar\'e lemma, its expression in geometric calculus, the integral formula, a proof of the strong lemma, and follow with its application to locally conserved quantities. We conclude with a brief discussion of implications.

\section{The Poincar\'e lemma}

John Baez and Javier P. Muniain in their text \emph{Gauge Fields, Knots, and Gravity} (1994) write Maxwell's equations in the language of differential forms as
\begin{align}\label{eq:maxwell-forms}
    \begin{split}
        \star d \star F = J
    \end{split}
    \begin{split}
        d F = 0,
    \end{split}
\end{align}
the first of which implies the continuity equation
\begin{equation}\label{eq:continuity-forms}
    d \star J = 0,
\end{equation}
where the electromagnetic field $F$ is now represented as a 2-form, the current density $J$ as a 1-form, $d$ is the exterior derivative, and $\star$ is the Hodge star on some spacetime manifold.

The fact that Equation~\ref{eq:continuity-forms} follows from Equations~\ref{eq:maxwell-forms} follows from the property of the exterior derivative $d^2 = 0$ or its adjoint $(\star d \star)^2 = 0$.  In particular, if we knew the conditions under which $(\star d \star) J = 0$ implied the existence of a $2$-form $F$ such that $J = (\star d \star) F$, then we would be halfway to showing that Maxwell's equations follow from the continuity equation---only missing the homogenous equations $d F = 0$.

This is precisely the subject of de Rham cohomology, which asks the question: When are closed differential forms exact --- where a $k$-form $\alpha$ is said to be closed if $d \alpha = 0$ and exact if there exists a $k-1$-form $\beta$ such that $\alpha = d \beta$?

When an $n$-dimensional manifold $\mathcal{M}$ is equipped with a metric, then a hodge star operator $\star$ can be defined that maps between $k$-forms and $n-k$-forms and yields the dual question: When are co-closed differential forms co-exact --- where a $k$-form $\alpha$ is co-closed if $(\star d \star) \alpha = 0$ and co-exact if there exists a $k+1$-form $\beta$ such that $\alpha = (\star d \star) \beta$? Notice that the exterior derivative $d$ raises the grade of forms, and its adjoint $\star d \star$ lowers grade.

Interestingly, the answer to these two questions depends strictly on the topology of $\mathcal{M}$. We will only utilize one small result of this theory, but the importance of working out the full details of this theory in geometric calculus should be noted. For a text that develops de Rham theory in detail, see Ref. \cite{warner}. The result we will use is the Poincar\'e lemma, which states the following.
\begin{thm}[Poincar\'e Lemma for Forms]\label{thm:forms}
    If $\mathcal{M}$ is a smooth, differentiable manifold, then closed differential forms are locally exact on $\mathcal{M}$.
\end{thm}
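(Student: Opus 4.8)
The plan is to reduce this purely local statement to a model computation on a star-shaped region of $\mathbb{R}^n$ and then write down an explicit operator expressing any closed form there as an exact one. First I would use the manifold structure: around a given point $p \in \mathcal{M}$ pick a chart $\varphi \colon U \to \mathbb{R}^n$ with $\varphi(U)$ an open ball, hence star-shaped about the origin. Since $d$ commutes with pullback, exactness of a form on $\varphi(U)$ transports back to exactness on $U$, so it suffices to treat a closed $k$-form $\alpha$ on a star-shaped open set $V \subseteq \mathbb{R}^n$. The degree $k = 0$ case is just the observation that $df = 0$ forces $f$ to be locally constant on the connected set $V$, so assume $k \geq 1$.

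Next I would construct the Poincar\'e homotopy (or ``cone'') operator on $V$. Let $H \colon [0,1] \times V \to V$ be the contraction $H(t,x) = tx$, so that $H_1 = \mathrm{id}_V$ and $H_0$ is the constant map to the origin. Decompose the pullback as $H^*\alpha = dt \wedge \beta_t + \gamma_t$, where $\beta_t$ and $\gamma_t$ are forms on $V$ not involving $dt$ and depending smoothly on $t$, and set
\[
  h\alpha \;=\; \int_0^1 \beta_t \, dt \;=\; \int_0^1 t^{k-1}\,\bigl(\iota_E \alpha\bigr)_{tx}\, dt ,
\]
where $E = \sum_i x^i \partial_{x^i}$ is the Euler vector field and $\iota_E$ is interior multiplication. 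Then I would establish the chain-homotopy identity
\[
  d(h\alpha) + h(d\alpha) \;=\; H_1^*\alpha - H_0^*\alpha ,
\]
by splitting the exterior derivative on $[0,1]\times V$ into its $t$-component and its $V$-component, applying the fundamental theorem of calculus in $t$, and differentiating under the integral sign. Because $H_0$ is constant, $H_0^*\alpha = 0$ for $k \geq 1$, so this reduces to $dh\alpha + hd\alpha = \alpha$. Hence $d\alpha = 0$ yields $\alpha = d(h\alpha)$, and pulling this back through $\varphi$ produces a $(k-1)$-form on $U$ whose exterior derivative is $\alpha$, i.e. local exactness on $\mathcal{M}$.

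I expect the one genuinely delicate step to be the verification of the homotopy identity: keeping the bidegree bookkeeping straight in the decomposition $H^*\alpha = dt\wedge\beta_t+\gamma_t$, tracking the signs correctly, and justifying the interchange of $d$ with the $t$-integral (legitimate since all data are smooth and the integration runs over a compact interval). Every other ingredient — the passage to charts, the star-shapedness of a ball, and naturality of $d$ — is routine, so the argument is essentially this single explicit construction together with its one careful calculation.
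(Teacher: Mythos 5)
Your argument is correct: it is the standard homotopy (cone) operator proof, and the identity $d(h\alpha)+h(d\alpha)=H_1^*\alpha-H_0^*\alpha$ on a star-shaped chart image does give local exactness for $k\ge 1$, with the degree-zero case handled by local constancy. The paper, however, does not prove this theorem at all: it cites Spivak, whose proof is stated for manifolds embedded in Euclidean space, and then appeals in a footnote to the Whitney embedding theorem to extend the statement to abstract smooth manifolds. Your route is genuinely different and more self-contained: since the statement is purely local, a single chart identifies a neighborhood with a ball in $\mathbb{R}^n$, naturality of $d$ under pullback transports exactness back, and no embedding theorem is needed; you also get the explicit primitive $h\alpha$, which makes visible exactly where smoothness and star-shapedness enter. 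What the paper's citation buys is brevity; what your construction buys is independence from the embedding machinery and an effective formula. One bookkeeping caution in your displayed operator: in $\int_0^1 t^{k-1}(\iota_E\alpha)_{tx}\,dt$ the interior product must insert the vector $x$ (the Euler field at the base point), not the Euler field evaluated at $tx$, which carries an extra factor of $t$ and would shift the exponent; as you yourself note, this is precisely the kind of sign and bidegree care the verification of the chain-homotopy identity demands.
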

\noindent Locally here means in contractible neighborhoods of points in open sets of $\mathcal{M}$, which exist for sufficiently small neighborhoods. See Ref. \cite{spivak} for a proof\footnote{Ref. \cite{spivak} proves this for differentiable manifolds embedded in Euclidean space. The theorem presented here follows for smooth, differentiable manifolds due to the Whitney embedding theorem.}. This implies the following dual lemma.
\begin{thm}[Dual Poincar\'e Lemma for Forms]\label{thm:dual-forms}
    If $\mathcal{M}$ is a smooth, differentiable manifold equipped with a metric, then co-closed differential forms are locally co-exact on $\mathcal{M}$.
\end{thm}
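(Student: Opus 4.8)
\textit{Proof sketch.} The plan is to deduce this statement from the Poincar\'e lemma for forms (Theorem~\ref{thm:forms}) by conjugating with the Hodge star, which is available precisely because $\mathcal{M}$ carries a metric. The key structural fact I would invoke is that on an $n$-manifold the Hodge star restricted to forms of a fixed grade $j$ is a fiberwise linear isomorphism with $\star\star = \pm\,\mathrm{id}$, the sign depending only on $j$, $n$, and the signature of the metric. In particular $\star$ is invertible on each grade and does not affect the underlying open sets, since it acts pointwise on the fibers; so "locally'' (contractible neighborhoods) will transfer verbatim.

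First I would take a co-closed $k$-form $\alpha$, so $(\star d\star)\alpha = 0$, and apply $\star$ to this identity. Since $\star d\star$ lowers grade by one, $(\star d\star)\alpha$ is a $(k-1)$-form and $d\star\alpha$ is an $(n-k+1)$-form; applying $\star\star = \pm\,\mathrm{id}$ on grade $(n-k+1)$ yields $d(\star\alpha) = 0$. Thus $\star\alpha$ is a closed $(n-k)$-form, and by Theorem~\ref{thm:forms} there is, in a sufficiently small contractible neighborhood of any chosen point, an $(n-k-1)$-form $\gamma$ with $\star\alpha = d\gamma$. (This step needs $n-k \geq 1$; the excluded top case $k=n$ is a genuine exception, exactly as $k=0$ is for ordinary exactness.)

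Next I would set $\beta := \star\gamma$, a $(k+1)$-form, and compute $(\star d\star)\beta = \star d(\star\star\gamma) = \pm\,\star d\gamma = \pm\,\star(\star\alpha) = \pm\,\alpha$, once more collecting the grade-dependent signs from each $\star\star$. Absorbing the overall sign into $\beta$ gives $\alpha = (\star d\star)\beta$ on that neighborhood, which is exactly the assertion that $\alpha$ is locally co-exact, completing the argument.

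I expect the only real work to be bookkeeping: checking that every occurrence of $\star\star$ above acts on a form of definite grade, so that it contributes only a harmless $\pm 1$, and that all of these signs can be consolidated into the single form $\beta$. That is the main (though minor) obstacle; the topological content is inherited wholesale from Theorem~\ref{thm:forms}, and no part of the locality statement is lost because the Hodge star is a bundle isomorphism over the same open sets.
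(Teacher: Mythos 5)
Your argument is correct and is essentially the paper's own proof: dualize the co-closed form with $\star$, apply the Poincar\'e lemma (Theorem~\ref{thm:forms}) to the resulting closed form, and pull the potential back through $\star$, with the only difference being that you track the $\star\star=\pm\,\mathrm{id}$ signs explicitly (and flag the top-degree exception) while the paper absorbs them silently into the choice of dual form.
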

\begin{proof}
  Consider a co-closed $n-k$-form $\rho$ in some contractible region of $\mathcal{M}$, satisfying $\star d \star \rho = 0$. Then $\rho = \star \alpha$ is dual to some $k$-form $\alpha$, which is closed: $d \alpha = d \star \rho = 0$. By Theorem~\ref{thm:forms}, there exists some $k-1$-form $\beta$ satisfying $\alpha = d \beta$. This implies that $\rho = \star \alpha = \star d \beta = \star d (\star \sigma$), where $\beta = \star \sigma$, for some $\sigma$ dual to $\beta$. Thus $\rho = (\star d \star) \sigma$ is co-exact.
\end{proof}
What this tells us is that Equation~\ref{eq:continuity-forms} implies the existence of a 2-form satisfying the first equation in Equations~\ref{eq:maxwell-forms}. Indeed, Fredriech Hehl and Yuri Obukhov use this fact to obtain the inhomogenous Maxwell equations $(\star d \star) F = J$ from the continuity equation $(\star d \star) J = 0$ in their premetric approach to electrodynamics \cite{hehl}.

Below we will examine precisely the conditions under which there exists a 2-form that satisfies both Equations~\ref{eq:maxwell-forms}. We begin by laying the groundwork for this examination. First we will present a brief overview of the machinery of geometric calculus and translate the above lemmas for use on vector manifolds where we can utilize the integral formula, then we will move on to proving the strong lemma and consider its application to local conservation laws.

\section{Geometric calculus on vector manifolds}

Maxwell's equations take the following form in geometric calculus
\begin{equation}\label{eq:maxwell-fields}
    D F = J
\end{equation}
Here $F$ is a bivector field, $J$ is a vector field, and $D$ is the covariant derivative on a 4-dimensional vector manifold. This encompasses both of Equations~\ref{eq:maxwell-forms} because the geometric product, which tells us $aM = a \cdot M + a \wedge M$ for any vector $a$ and multivector $M$, unifies the exterior derivative $d$ and its adjoint $\star d \star$ into a single covariant derivative operator $D$. We will discuss the relationship between operators in more detail below.

A vector manifold $\mathcal{M}$ is a manifold the points of which are vectors \cite{hestenes, doran}. One way to construct such a manifold is to embed it as a surface in a higher dimensional flat space. There is an intrinsic approach as well \cite{hestenes-shape}, however we will take extrinsic perspective in this paper. Vector manifolds are sufficiently general to describe smooth Riemannian manifolds, which are the primary manifolds of interest in physics \cite{doran}.

There are two derivative operators on vector manifolds that are important for our purposes. First, the vector derivative operator $\partial$, which can be regarded as the projection of the ambient, flat space derivative operator onto the tangent space of $\mathcal{M}$ \cite{macdonald, hestenes, doran}. Secondly, the covariant derivative $D$
\begin{equation}\label{eq:vector-covariant}
  D A = \partial A - S(A),
\end{equation}
defined for any multivector field $A$ that lives in the tangent space of $\mathcal{M}$, where $S(A)$ is called the shape operator and encodes important information about the curvature of $\mathcal{M}$ \cite{hestenes, doran}. The exterior derivative and its adjoint are related to $D$.

Notice that the continuity equation in geometric calculus
\begin{equation}\label{eq:continuity-fields}
    D \cdot J = 0,
\end{equation}
follows from Equation~\ref{eq:maxwell-fields} due to the fact that the divergence of a divergence $D \cdot (D \cdot M) = 0$ vanishes for all multivector fields $M$ on $\mathcal{M}$. This is also true for the curl of a curl, $D \wedge (D \wedge M) = 0$. It turns out that these are equivalent to $(\star d \star)^2 = 0$ and $d^2 = 0$, respectively.

In particular, any $r$-form $\alpha_r$ can be written in terms of an $r$-vector:
\begin{align}\label{eq:form-field}
    \alpha_r = A_r \cdot dX_r^\dagger,
\end{align}
where $dX_r = I_r |dX_r|$ is a \emph{directed} measure with grade $r$.

Importantly, the exterior derivative of $\alpha_r$ is equivalent to the curl:
\begin{align}\label{eq:exterior-curl}
    d \alpha_r = (D \wedge A_r) \cdot dX_{r+1}^\dagger,
\end{align}
and the adjoint is equivalent to the divergence (up to a sign):
\begin{align}\label{eq:adjoint-divergence}
    (-1)^{n(r+1)+1}(\star d \star) \alpha_r = (D \cdot A_r) \cdot dX_{r-1}^\dagger.
\end{align}

See Chapter~6.4 of Ref. \cite{doran} or Section~6-5 of Ref. \cite{hestenes} for more details. These equivalences allow us to work freely with $r$-vectors and the covariant derivative $D$ in place of $r$-forms, the exterior derivative $d$, and its adjoint.

Using this correspondence, the Poincar\'e lemma and its dual can be expressed in the language of geometric calculus as follows.
\begin{thm}[Poincar\'e Lemma for Fields]\label{thm:fields}
    If $\mathcal{M}$ is a smooth, differentiable vector manifold, then curl-free fields are locally the curl of a field.
\end{thm}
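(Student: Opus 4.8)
The plan is to deduce Theorem~\ref{thm:fields} directly from the Poincar\'e Lemma for Forms (Theorem~\ref{thm:forms}) via the dictionary established in Equations~\ref{eq:form-field}--\ref{eq:adjoint-divergence}. First I would fix a point $p \in \mathcal{M}$ and pass to a contractible neighborhood $U$ of $p$ on which Theorem~\ref{thm:forms} applies. Given an $r$-vector field $A_r$ on $U$ satisfying $D \wedge A_r = 0$, I would form the associated $r$-form $\alpha_r = A_r \cdot dX_r^\dagger$. By Equation~\ref{eq:exterior-curl}, $d\alpha_r = (D \wedge A_r)\cdot dX_{r+1}^\dagger = 0$, so $\alpha_r$ is closed. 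Theorem~\ref{thm:forms} then furnishes an $(r-1)$-form $\beta$ on $U$ with $\alpha_r = d\beta$.

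The second step is to translate $\beta$ back into the multivector picture: write $\beta = B_{r-1} \cdot dX_{r-1}^\dagger$ for a unique $(r-1)$-vector field $B_{r-1}$. Applying Equation~\ref{eq:exterior-curl} once more, $d\beta = (D \wedge B_{r-1}) \cdot dX_r^\dagger$, and comparing with $\alpha_r = A_r \cdot dX_r^\dagger$ gives $A_r = D \wedge B_{r-1}$, exhibiting $A_r$ as the curl of a field on $U$. Since $p$ was arbitrary, curl-free fields are locally curls, which is the claim. Dually, one runs the same argument through Theorem~\ref{thm:dual-forms} and Equation~\ref{eq:adjoint-divergence} to see that divergence-free fields are locally divergences; the sign factor $(-1)^{n(r+1)+1}$ in Equation~\ref{eq:adjoint-divergence} is harmless since it only rescales the antiderivative.

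The main obstacle is not the cohomological input, which is entirely off-the-shelf, but rather making the correspondence Equations~\ref{eq:form-field}--\ref{eq:adjoint-divergence} do real work: one must check that the assignment $A_r \mapsto \alpha_r = A_r \cdot dX_r^\dagger$ is a genuine bijection between $r$-vector fields and $r$-forms on $U$ (so that $\beta$ really does come from a well-defined $B_{r-1}$), and that it intertwines $D\wedge$ with $d$ grade-by-grade, including on the boundary grades $r=0$ and $r=n$ where the directed measures degenerate. This is where the extrinsic, flat-ambient-space construction of the vector manifold is used implicitly: the identity $d\alpha_r = (D\wedge A_r)\cdot dX_{r+1}^\dagger$ presupposes the geometric-calculus machinery of Refs.~\cite{doran, hestenes}, so I would simply cite those sources for the correspondence and treat the remainder as a short formal deduction. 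One caveat worth flagging in the proof is the meaning of ``locally'': it is the same contractible-neighborhood sense as in Theorem~\ref{thm:forms}, and no metric is needed here — the metric only enters in the strong version that follows.
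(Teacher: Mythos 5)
Your proposal is correct and takes essentially the same route as the paper: pass the curl-free field through the form--field dictionary of Equation~\ref{eq:exterior-curl}, invoke Theorem~\ref{thm:forms} on a contractible neighborhood, and translate the resulting potential form back to a multivector. The only difference is that the paper's proof treats a general mixed-grade field $F$ by first decomposing it as $F_r = \langle F \rangle_r$ and running your argument grade-by-grade before summing the potentials, a one-line addition your homogeneous-grade argument would need to cover the full statement.
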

\begin{proof}
  Let $F$ be a multivector field on manifold $\mathcal{M}$ of dimension $n$ and $F_r = \langle F \rangle_r$ be the grade $r$ part of $F$. If $D \wedge F = 0$, then $D \wedge F_{r} = \langle D \wedge F \rangle_{r+1} = 0$ for each $r$. Utilizing Theorem~\ref{thm:forms} and Equation~\ref{eq:exterior-curl}, we have that there exists some field $A_{r-1}$ of grade $r-1$ such that $F_r = D \wedge A_{r-1}$. This implies that $F = D \wedge A$, for $A = \sum_{r=0}^{n} A_{r}$.
\end{proof}
\noindent
The dual is obtained by an analogous proof.
\begin{thm}[Dual Poincar\'e Lemma for Fields]\label{thm:dual-fields}
    If $\mathcal{M}$ is a smooth, differentiable vector manifold, then divergence-free fields are locally the divergence of a field.
\end{thm}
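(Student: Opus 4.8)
The plan is to mirror the proof of Theorem~\ref{thm:fields} exactly, but replace every use of the Poincar\'e lemma for forms (Theorem~\ref{thm:forms}) with its dual (Theorem~\ref{thm:dual-forms}), and every use of the curl--exterior-derivative correspondence (Equation~\ref{eq:exterior-curl}) with the divergence--adjoint correspondence (Equation~\ref{eq:adjoint-divergence}). Concretely, let $F$ be a multivector field on the $n$-dimensional vector manifold $\mathcal{M}$ with $D \cdot F = 0$, and decompose it into homogeneous parts $F_r = \langle F \rangle_r$. Since the divergence lowers grade by one, $D \cdot F_r = \langle D \cdot F \rangle_{r-1} = 0$ for each $r$, so it suffices to treat each grade separately.

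For a fixed grade $r$, I would pass to the associated $r$-form $\alpha_r = F_r \cdot dX_r^\dagger$ via Equation~\ref{eq:form-field}. By Equation~\ref{eq:adjoint-divergence}, the vanishing of $D \cdot F_r$ is equivalent (up to the sign $(-1)^{n(r+1)+1}$, which is irrelevant for the vanishing condition) to $\alpha_r$ being co-closed: $(\star d \star)\alpha_r = 0$. Theorem~\ref{thm:dual-forms} then supplies, in a contractible neighborhood, an $(r+1)$-form $\beta_{r+1}$ with $\alpha_r = (\star d \star)\beta_{r+1}$. Translating $\beta_{r+1}$ back to an $(r+1)$-vector field $B_{r+1}$ through Equation~\ref{eq:form-field} and applying Equation~\ref{eq:adjoint-divergence} in reverse yields $F_r = D \cdot B_{r+1}$. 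Summing over grades, $F = D \cdot B$ with $B = \sum_{r=0}^{n} B_{r+1}$, which is the claimed local antiderivative.

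The only point that requires a little care — and it is the analogue of the sole genuine content in the proof of Theorem~\ref{thm:fields} — is bookkeeping the grade shifts and the sign in Equation~\ref{eq:adjoint-divergence} so that the correspondence is applied consistently in both directions; since $\star d \star$ lowers grade while $d$ raises it, the roles of "$r-1$" and "$r+1$" are swapped relative to the curl case, but nothing deeper changes. There is no new obstacle here: the metric-dependence has already been absorbed into Theorem~\ref{thm:dual-forms}, and the contractible-neighborhood hypothesis is inherited from it. I would simply remark that the dual is obtained by the analogous argument, as the excerpt already anticipates, and record the statement with $F = D \cdot B$ holding in any sufficiently small (contractible) neighborhood of a point of $\mathcal{M}$.
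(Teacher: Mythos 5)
Your proposal is correct and is exactly the argument the paper intends: the paper gives no separate proof of Theorem~\ref{thm:dual-fields}, stating only that ``the dual is obtained by an analogous proof,'' and your grade-by-grade decomposition using Theorem~\ref{thm:dual-forms} together with Equation~\ref{eq:adjoint-divergence} is precisely that analogue of the proof of Theorem~\ref{thm:fields}. The only cosmetic quibble is the index bookkeeping in the final sum (the top-grade term $B_{n+1}$ is necessarily zero), which does not affect the argument.
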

\section{Strong Poincar\'e lemma}

The only remaining result we need before turning to the strong lemma is the integral formula of geometric calculus. For a detailed exposition of the integral formula, see Section~7-3 of Ref. \cite{hestenes}. We simply present the result.
\begin{thm}[The Integral Formula]\label{thm:integral-formula}
    Let $F$ be a field, integrable on a simple\footnote{Without self-intersections.}, $n$-dimensional vector manifold $\mathcal{M}$. Then it possesses antiderivatives $A$ with respect to the vector derivative $\partial A = F$, determined by $F$ up to boundary conditions, given by
    \begin{align}\label{eq:integral-formula}
        (-1)^m I(x) A(x) = &- \int g(x, x') d^m x' F(x') \\
        &+ \oint g(x, x') d^{m-1} x' A(x'),\nonumber
    \end{align}
    where $I=I(x)$ is the unit pseudoscalar field over $\mathcal{M}$ and $g$ is a Green's function of $\partial$.
\end{thm}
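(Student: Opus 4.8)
The plan is to obtain the formula as a corollary of the fundamental theorem of geometric calculus---the directed-integral generalization of the fundamental theorem of calculus, which for a multivector field $L$ reads, schematically, $\int_{\mathcal{M}} d^m x\,\partial L = \oint_{\partial\mathcal{M}} d^{m-1} x\, L$, with the vector derivative acting on $L$ and the directed measures carrying the orientation. First I would fix a point $x$ in the interior of $\mathcal{M}$, treat $x'$ as the variable of integration, and apply this identity to the integrand $L(x') = g(x,x')\,A(x')$, where $g$ is the assumed Green's function of the vector derivative $\partial'$ in the $x'$ slot, normalized so that $\partial' g(x,x') = \delta^m(x-x')$ in the distributional sense. (The existence of such a $g$ is part of the hypotheses; on flat space it is the familiar kernel proportional to $(x-x')/|x-x'|^m$.)

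The central step is then to expand $\partial'\bigl(g A\bigr)$ using the Leibniz rule for the geometric product. This splits the left-hand side into two pieces. In the piece where $\partial'$ differentiates $g$, the defining property of the Green's function turns $\partial' g$ into a Dirac delta, which collapses the volume integral to $\pm I(x) A(x)$, the pseudoscalar surviving from the directed measure. In the piece where $\partial'$ differentiates $A$, one is left---after commuting the pseudoscalar into place---with the volume integral $\int_{\mathcal{M}} g(x,x')\,d^m x'\,F(x')$, where $F = \partial A$. The fundamental theorem supplies the surface term $\oint_{\partial\mathcal{M}} g(x,x')\,d^{m-1} x'\,A(x')$. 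Collecting the three contributions and solving for the $I(x)A(x)$ term yields the stated identity, the overall sign $(-1)^m$ coming from the orientation conventions and the parity of the directed volume element.

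From this identity the three assertions are read off. Because it reconstructs an arbitrary field $A$ from $F = \partial A$ together with the restriction of $A$ to $\partial\mathcal{M}$, any two antiderivatives of the same $F$ differ by a field annihilated by $\partial$, and specializing the identity to $F = 0$ shows that such a field is fixed by its boundary values---this is the sense in which an antiderivative is determined up to boundary conditions. For existence one turns the identity around: on flat space $I(x)$ is constant, so one may take the right-hand side, with a chosen boundary field inserted for $A$, as the definition of a candidate antiderivative and verify $\partial A = F$ by differentiating under the integral in $x$ and again invoking $\partial_x g(x,x') = \delta^m(x-x')$, the boundary integral dropping out because it is annihilated by $\partial$ for $x$ away from $\partial\mathcal{M}$.

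The step I expect to be the main obstacle is bookkeeping rather than conceptual: because $\partial'$ is a vector-valued operator and the Green's function $g$ is itself vector-valued, the Leibniz rule does not distribute naively across the geometric product, so the overdot conventions must be tracked carefully---to land the derivative on the correct factor, to extract $F$ (as opposed to some reordered product of $g$ with the components of $\partial' A$) in the volume term, and to pin down the sign $(-1)^m$ and the handedness of the pseudoscalar $I(x)$. A secondary subtlety is that on a general vector manifold $I(x)$ is not constant and $\partial$ does not annihilate the directed measure; the clean statement above pertains to the flat case---which is all the application to Maxwell's equations requires---whereas on curved manifolds one must replace $\partial$ by the covariant derivative $D$ and carry the additional shape-operator terms, in the spirit of the generalizations taken up later in the paper.
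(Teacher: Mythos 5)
The paper does not actually prove this theorem---it presents the result and defers to Hestenes and Sobczyk, Section~7-3---and your sketch is precisely that standard derivation: apply the fundamental theorem of geometric calculus to $L(x') = g(x,x')A(x')$, split the two-sided derivative via the overdot/Leibniz rule, and let the Green's-function delta collapse the volume term to $(-1)^m I(x)A(x)$, with the remaining pieces giving the $\int g\, d^m x'\, F$ and $\oint g\, d^{m-1}x'\, A$ terms. Your treatment of uniqueness (antiderivatives differ by a monogenic field fixed by boundary values) and of existence (define $A$ by the right-hand side and verify $\partial A = F$ by differentiating under the integral) is also sound, with the caveat you yourself flag: that verification is clean only in the flat case with $I(x)$ constant, which is in fact the only case the paper subsequently uses in the strong Poincar\'e lemma.
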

All the usual subtleties are present in actual computation of this integral. For instance, on spacetime manifolds, null surfaces cannot be used as boundaries. See Section~6.3 of Ref. \cite{doran} for discussion of this issue. Antiderivatives differ at most by a monogenic field\footnote{Monogenic fields $\psi$ on $\mathcal{M}$ are fields that satisfy $\partial \psi = 0$ and are fully determined by boundary conditions due to Theorem~\ref{eq:integral-formula}. Notice that for two-dimensional monogenic functions, Equation~\ref{eq:integral-formula} reduces to Cauchy's integral formula of complex analysis. See Section~6.4 of Ref. \cite{doran} or Section~7-4 of Ref. \cite{hestenes} for details.} satisfying $\partial \psi = 0$. Once boundary conditions for $A$ are specified on $\mathcal{M}$, Equation~\ref{eq:integral-formula} yields a unique antiderivative. This gives us all we need to prove the strong lemma.

\begin{thm}[Strong Poincar\'e Lemma]\label{thm:strong}
    If $\mathcal{M}$ is a flat manifold, then curl-free fields have divergence-free antiderivatives locally.
\end{thm}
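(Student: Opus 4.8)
The plan is to combine the Poincar\'e lemma for fields (Theorem~\ref{thm:fields}) with the integral formula (Theorem~\ref{thm:integral-formula}) in a two-step argument. Suppose $F$ is a curl-free field on a flat manifold $\mathcal{M}$, so $D \wedge F = 0$. Since $\mathcal{M}$ is flat, the shape operator vanishes and the covariant derivative $D$ agrees with the ambient vector derivative $\partial$; thus $\partial \wedge F = 0$. By Theorem~\ref{thm:fields}, in a contractible neighborhood there exists a field $A$ with $F = \partial \wedge A$. This $A$ is the starting candidate for an antiderivative, but a priori it satisfies only the curl relation $F = \partial \wedge A$, not the full $\partial A = F$, and it need not be divergence-free. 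The task is to modify $A$ by a monogenic adjustment so that both conditions hold.

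First I would observe that we may pass to the full vector-derivative antiderivative. Since $F = \partial \wedge A = \partial A - \partial \cdot A$, we can write $\partial A = F + \partial\cdot A$. We want an antiderivative $A'$ with $\partial A' = F$ exactly, i.e. with $\partial \cdot A' = 0$, which is precisely the divergence-free condition sought. The existence of such an $A'$ is where the integral formula enters: apply Theorem~\ref{thm:integral-formula} to $F$ itself on a contractible region to produce a field $A'$ with $\partial A' = F$, with the boundary term at our disposal. Decomposing $\partial A' = \partial \cdot A' + \partial \wedge A' = F$ into graded parts, and using that $\partial \wedge F = 0$ lets us check that the scalar/lower-grade pieces force $\partial \cdot A' = 0$: indeed $\partial \cdot (\partial A') = \partial \cdot F$, but also $\partial(\partial A') = \partial^2 A'$ relates the two halves, and the curl-free hypothesis on $F$ is exactly what is needed to make the divergence part consistent. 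I would also need to confirm that the candidate $A'$ lives in the tangent algebra of $\mathcal{M}$, which on a flat manifold is unproblematic since the tangent spaces can all be identified.

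Alternatively — and this may be the cleaner route — I would run the divergence-free condition through the dual Poincar\'e lemma (Theorem~\ref{thm:dual-fields}) after first establishing that $F$ being curl-free on flat $\mathcal{M}$ lets us produce, via the integral formula, an antiderivative $A$ with $\partial A = F$; then split $A = A_{\mathrm{sol}} + A_{\mathrm{irr}}$ into a divergence-free and curl-free part (a Hodge-type decomposition, valid locally and furnished again by the two Poincar\'e lemmas applied to $\partial \cdot A$ and $\partial \wedge A$ separately). The curl-free part $A_{\mathrm{irr}}$ satisfies $\partial \wedge A_{\mathrm{irr}} = 0$, hence $\partial A_{\mathrm{irr}} = \partial \cdot A_{\mathrm{irr}}$ is purely "lower grade" and contributes nothing to the reconstruction of $F = \partial A$ beyond grade bookkeeping; discarding it (or absorbing it into a monogenic gauge freedom) leaves $A_{\mathrm{sol}}$ with $\partial \cdot A_{\mathrm{sol}} = 0$ and $\partial A_{\mathrm{sol}} = F$, as desired.

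The main obstacle I anticipate is making the grade-by-grade bookkeeping rigorous: $\partial A = F$ is an equation among mixed-grade multivectors, and one must track how $\partial \cdot$ and $\partial \wedge$ interleave the grades so that the single hypothesis $D \wedge F = 0$ genuinely suffices to kill the divergence of the constructed antiderivative, rather than merely some of its graded components. Relatedly, I would need to be careful that the monogenic ambiguity in Theorem~\ref{thm:integral-formula} — adjustable via boundary data on the contractible region — is rich enough to enforce $\partial \cdot A' = 0$ without reintroducing curl; showing this residual freedom is exactly a gradient-like monogenic field, and that adding it can cancel any unwanted divergence, is the technical heart of the argument. The flatness hypothesis is essential precisely here: it is what collapses $D$ to $\partial$ so that the integral formula (stated for $\partial$) applies directly and the shape-operator corrections $S(A)$ do not obstruct the decomposition.
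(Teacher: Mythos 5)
There is a genuine gap, and it sits exactly where you yourself locate ``the technical heart of the argument.'' Both of your routes apply the integral formula to $F$ itself and then try to repair the resulting antiderivative $A'$ so that $\partial \cdot A' = 0$, and neither repair goes through as described. In the first route, the claim that $\partial \wedge F = 0$ forces $\partial \cdot A' = 0$ is not true: $\partial A' = F$ only gives $\partial \cdot A' + \partial \wedge A' = F$, and how $F$ splits between these two pieces is not determined by $F$; an antiderivative produced by Theorem~\ref{thm:integral-formula} will generically carry nonzero divergence even when $F$ is curl-free. Already for $F = 0$ the ambiguity is a monogenic field, and mixed-grade monogenic fields need not be divergence-free (e.g.\ $\psi = x + y\, e_1 \wedge e_2$ in the Euclidean plane satisfies $\partial \psi = 0$ but $\partial \cdot \psi = -e_1$). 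In the second route, discarding $A_{\mathrm{irr}}$ is not harmless: $\partial A_{\mathrm{sol}} = F - \partial \cdot A_{\mathrm{irr}}$, and $\partial \cdot A_{\mathrm{irr}}$ vanishes only if $A_{\mathrm{irr}}$ is monogenic, which your local Hodge-type splitting does not guarantee. So what your plan actually requires --- the existence of a monogenic correction with prescribed divergence $-\partial \cdot A'$ --- is a nontrivial existence statement that you have not proved, and it is essentially equivalent in difficulty to the theorem itself.

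The paper sidesteps this entirely by reversing the order in which the two lemmas are used. From $D \wedge F = 0$ it first obtains, by Theorem~\ref{thm:fields}, a potential $A$ with $F = D \wedge A$; it then applies the integral formula not to $F$ but to this potential, producing $B$ with $A = \partial B = D B$ (flatness). The divergence-free antiderivative is then $G = D \cdot B$: divergence-freeness is automatic from the identity $D \cdot (D \cdot B) = 0$, and
\begin{equation*}
    D G = D \wedge (D \cdot B) = D \wedge (D \cdot B + D \wedge B) = D \wedge (D B) = D \wedge A = F,
\end{equation*}
using $D \wedge (D \wedge B) = 0$. No tuning of boundary data and no monogenic correction is needed; the identities $D \cdot (D \cdot\,) = 0$ and $D \wedge (D \wedge\,) = 0$ do all the work. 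If you want to salvage your strategy, you would have to actually establish that the monogenic freedom in Theorem~\ref{thm:integral-formula} can absorb an arbitrary divergence of the form $\partial \cdot A'$; the two-step construction above is the simpler and complete path.
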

\begin{proof}
    Let $F$ be a curl-free field on $\mathcal{M}$, such that $D \wedge F = 0$. Then locally, there exists a field $A$ on $\mathcal{M}$ such that $F = D \wedge A$ by Theorem~\ref{thm:fields}. By Theorem~\ref{eq:integral-formula}, $A$ possesses a local antiderivative $B$ such that $A = \partial B$. Since $\mathcal{M}$ is flat, we have $A = D B$, which implies
    \begin{align}
        \begin{split}
            F &= D \wedge A \\
              &= D \wedge (D \cdot B + D \wedge B) \\
              &= D \wedge (D \cdot B) \\
              &= D (D \cdot B),
        \end{split}
    \end{align}
    where we have used the facts $D \wedge (D \wedge B) = 0$ and $D \cdot (D \cdot B) = 0$. Thus $D \cdot B$ is a divergence-free, covariant antiderivative of $F$.
\end{proof}
\noindent
The dual is obtained by an analogous proof.
\begin{thm}[Strong Dual Poincar\'e Lemma]\label{thm:dual-strong}
    If $\mathcal{M}$ is a flat manifold, then divergence-free fields have curl-free antiderivatives locally.
\end{thm}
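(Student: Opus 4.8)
The plan is to run the proof of Theorem~\ref{thm:strong} essentially verbatim, interchanging the interior product $\cdot$ with the exterior product $\wedge$ throughout. So I would start from a divergence-free field $F$ on the flat manifold $\mathcal{M}$, meaning $D \cdot F = 0$, and apply Theorem~\ref{thm:dual-fields} to obtain, on a sufficiently small contractible neighborhood, a field $A$ with $F = D \cdot A$.

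Next I would feed $A$ into the integral formula of Theorem~\ref{thm:integral-formula} to produce a local antiderivative $B$ with respect to the vector derivative, $\partial B = A$. Since $\mathcal{M}$ is flat, the shape operator vanishes, so the vector derivative and the covariant derivative coincide and $A = D B = D \cdot B + D \wedge B$.

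The key step is then the short computation
\begin{align}
  \begin{split}
    F &= D \cdot A \\
      &= D \cdot (D \cdot B + D \wedge B) \\
      &= D \cdot (D \wedge B) \\
      &= D (D \wedge B),
  \end{split}
\end{align}
where I use $D \cdot (D \cdot B) = 0$ to kill the first term and $D \wedge (D \wedge B) = 0$ to collapse the geometric product to the interior product in the last line. This exhibits $D \wedge B$ as a covariant antiderivative of $F$, and it is manifestly curl-free because $D \wedge (D \wedge B) = 0$.

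The main obstacle is the same as in the non-dual case and is more bookkeeping than substance: one must ensure that the neighborhood furnished by the dual Poincar\'e lemma, the admissibly bounded (e.g.\ non-null) region required to apply the integral formula, and the flatness-based identification $\partial = D$ are all simultaneously in force. Working inside a single small contractible coordinate ball on which $F$ is divergence-free resolves this, so that ``locally'' in the statement carries exactly the meaning it does in Theorem~\ref{thm:strong}.
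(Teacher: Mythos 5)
Your proposal is correct and matches the paper's intent exactly: the paper proves Theorem~\ref{thm:strong} and then states that Theorem~\ref{thm:dual-strong} follows ``by an analogous proof,'' and your argument is precisely that dualization---swap $\cdot$ and $\wedge$, use Theorem~\ref{thm:dual-fields} in place of Theorem~\ref{thm:fields}, and conclude $F = D(D\wedge B)$ with $D\wedge B$ curl-free.
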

These theorems are restricted to flat manifolds because Theorem~\ref{eq:integral-formula} only provides antiderivatives with respect to the vector derivative $\partial$, and $D = \partial$ only holds on flat manifolds. If there exists a construction analogous to Theorem~\ref{eq:integral-formula} for covariant antiderivatives, then Theorems \ref{thm:strong} and \ref{thm:dual-strong} immediately generalize to curved manifolds.

A starting point for determining whether Theorem~\ref{thm:strong} generalizes to curved manifolds may be to consider the fundamental theorem of geometric calculus \cite{hestenes, doran} in terms of the covariant derivative and shape operator, as given by Equation~\ref{eq:vector-covariant}.

\begin{align}
    \int \dot g dX \dot \partial \dot A = \int \dot g dX (\dot D + \dot S) \dot A = \oint g dS A,
\end{align}
where $S$ is the shape operator. If we take $g$ to be a Green's function for the covariant derivative $D$, instead of the vector derivative, as is done in derivation of Equation~\ref{eq:integral-formula} \cite{hestenes}, we arrive at the formula
\begin{align}\label{eq:covariant-integral-formula}
    (-1)^m I A = &- \int g dX D A + \oint g dS A\\
    &- (-1)^m \int dX (g S (A) - S(g) A). \nonumber
\end{align}

It appears that we're stuck here, because the right-hand side includes integrals of $A$ on $\mathcal{M}$, which is what we're trying to determine with knowledge of $A$ on $\partial{\mathcal{M}}$ and $DA$ on $\mathcal{M}$. However, Equation~\ref{eq:covariant-integral-formula} is worth consideration, because for any field $F$ on $\mathcal{M}$, $S(F)$ lives outside of the tangent space of $\mathcal{M}$, so it is possible that projecting Equation~\ref{eq:covariant-integral-formula} onto $\mathcal{M}$ eliminates some contributions from the integrals involving the shape operator. For now, we will leave the conditions under which one can compute covariant antiderivatives (with respect to $D$) as an open question. Determining these conditions would be of great interest to understanding the extent to which Theorem~\ref{thm:strong} extends to curved manifolds.

Curiously, Theorem~\ref{thm:strong} appears to be rare in existing literature. One instance in Euclidean space can be found in Ref. \cite{brackx}. We've shown that the strong lemma holds on arbitrary flat manifolds, which is not immediately obvious due to its dependence on the metric, and presented conditions for its generalization to curved manifolds.

\section{Locally conserved quantities}

We now consider application to locally conserved quantities.
\begin{thm}[Maxwell's Equation for Conserved Vector Fields]\label{thm:conserved-currents} Local conservation laws of the form
    \begin{equation}
        D \cdot J = 0,
    \end{equation}
    where $J$ is a vector field on flat spacetime, imply the existence of an antiderivative $F$ satisfying Maxwell's equations
    \begin{equation}
        D F = D \cdot F = J.
    \end{equation}
\end{thm}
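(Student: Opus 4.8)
The plan is to read this off directly from the Strong Dual Poincar\'e Lemma (Theorem~\ref{thm:dual-strong}). The hypothesis $D \cdot J = 0$ says exactly that $J$ is a divergence-free vector field on flat spacetime, so the lemma supplies, in some contractible neighborhood of any point, a field $F$ with $D \cdot F = J$ and $D \wedge F = 0$. (Equivalently, one can rerun the dual of the proof of Theorem~\ref{thm:strong}: write $J = D\cdot A$ by the dual Poincar\'e lemma for fields, pass to a vector-derivative antiderivative $B$ of $A$ via the integral formula, use $D=\partial$ on a flat manifold, and then $J = D\cdot A = D\cdot(D\cdot B + D\wedge B) = D\cdot(D\wedge B)$ since $D\cdot(D\cdot B)=0$, so $F := D\wedge B$ is curl-free with $D\cdot F = J$.)

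Given such an $F$, the packaging of the two Maxwell equations into $DF = J$ is immediate from the geometric product: $DF = D \cdot F + D \wedge F = J + 0 = J$. It then remains only to observe that $F$ may be taken to be a bivector. Since $J$ has grade $1$ and both $D \cdot$ and $D \wedge$ act grade by grade, the grade-$2$ part $\langle F \rangle_2$ already satisfies $D \cdot \langle F \rangle_2 = J$ and $D \wedge \langle F \rangle_2 = 0$; replacing $F$ by $\langle F \rangle_2$ yields the electromagnetic bivector of Equation~\ref{eq:maxwell-fields}, with $D \cdot F = J$ the inhomogeneous equations and $D \wedge F = 0$ the homogeneous ones, both recovered from the single continuity hypothesis.

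I do not expect a substantive obstacle: the mathematical content is entirely contained in Theorem~\ref{thm:dual-strong}, which in turn rests on the integral formula (Theorem~\ref{thm:integral-formula}). The points that need care are bookkeeping rather than difficulty: (i) ``locally'' must be read as within a contractible region on which the integral formula applies, and on Lorentzian spacetime one must avoid null hypersurfaces as boundaries, as noted after Theorem~\ref{thm:integral-formula}; (ii) flatness is genuinely used, inherited from Theorem~\ref{thm:dual-strong} through the identification $D = \partial$; and (iii) one should spell out, as above, why the antiderivative can be chosen to be a pure bivector, so that it matches the physical field $F$ rather than some mixed-grade object. A closing remark that the homogeneous equations $D \wedge F = 0$ come for free — so the construction yields a genuine field strength, not merely a potential for $J$ — would round out the argument.
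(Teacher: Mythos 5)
Your proof is correct and is essentially the paper's own argument: the paper's entire proof is ``Apply Theorem~\ref{thm:dual-strong} to $J$,'' which is exactly your first paragraph, with the identity $DF = D\cdot F + D\wedge F$ doing the packaging. Your extra observations (unpacking the integral-formula construction and extracting the grade-$2$ part $\langle F\rangle_2$ so the antiderivative is a genuine bivector) are sound and go slightly beyond what the paper spells out, but they do not change the route.
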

\begin{proof}
    Apply Theorem~\ref{thm:dual-strong} to $J$.
\end{proof}

Note that this is precisely Equation~\ref{eq:maxwell-fields} when $J$ is an electric current density. This reinforces the result of Ref. \cite{heras} which argues that local charge conservation is sufficient to obtain Maxwell's equation and can serve as a fundamental principle in axiomatic approaches to electrodynamics. The proof given here eliminates the need to assume particular boundary conditions. If Theorem~\ref{thm:dual-strong} holds on curved manifolds, then Theorem~\ref{thm:conserved-currents} would also generalize to curved spacetimes.

Conserved quantities of this form include local charge conservation in electrodynamics, local mass conservation in continuum mechanics, and local probability conservation in quantum mechanics. Note that this theorem also applies directly to conserved tensors without modification. We rephrase for the sake of being explicit.

\begin{thm}[Maxwell's Equation for Conserved Tensors]\label{thm:conserved-tensors} Local conservation laws that can be expressed as
\begin{equation}
    D \cdot T^\nu = D_\mu T^{\mu \nu} = 0
\end{equation}
for tensors $T^{\mu\nu}$, imply the existence of a bivector-valued antiderivative $F^\nu = \frac{1}{2} \gamma_\mu \wedge \gamma^\rho F^{\mu \nu}_{\rho}$ for each $T^\nu$ such that
\begin{equation}
    D F^\nu = D \cdot F^\nu = T^\nu.
\end{equation}
\end{thm}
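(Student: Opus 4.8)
The plan is to reduce Theorem~\ref{thm:conserved-tensors} to Theorem~\ref{thm:conserved-currents} by treating the free index $\nu$ as a passive label. First I would fix $\nu$ and observe that the object $T^\nu := \gamma_\mu T^{\mu\nu}$ --- the contraction of the tensor on its first slot against the frame vectors $\gamma_\mu$ --- is an ordinary vector field on flat spacetime, since for each fixed $\nu$ the components $T^{\mu\nu}$ ($\mu = 0,\dots,3$) are simply the components of a spacetime vector. The hypothesis $D_\mu T^{\mu\nu} = 0$ is then precisely the statement $D \cdot T^\nu = 0$, i.e.\ $T^\nu$ is divergence-free.

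Next I would apply Theorem~\ref{thm:conserved-currents} (equivalently, the strong dual Poincar\'e lemma, Theorem~\ref{thm:dual-strong}) to the vector field $T^\nu$. This yields, locally, a bivector field $F^\nu$ with $D F^\nu = D \cdot F^\nu = T^\nu$, the grade-three part $D \wedge F^\nu$ vanishing automatically exactly as in the proof of Theorem~\ref{thm:conserved-currents}. Expanding $F^\nu$ in the frame bivectors gives the component form $F^\nu = \tfrac12 \gamma_\mu \wedge \gamma^\rho F^{\mu\nu}_{\rho}$ asserted in the statement. Repeating this for each $\nu = 0,\dots,3$ produces the claimed family $\{F^\nu\}$.

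The only genuine point to verify --- and the main (mild) obstacle --- is that nothing in the construction secretly entangles the ``vector index'' $\mu$ with the ``label index'' $\nu$: the divergence in $D \cdot T^\nu$ acts only on $\mu$, the Poincar\'e and integral-formula machinery of Theorems~\ref{thm:fields} and \ref{thm:integral-formula} is applied slot-by-slot in $\mu$, and $\nu$ never enters except as an inert tag on the field being processed. Since that machinery is local and linear in its input, this is immediate, and in particular no symmetry assumption on $T^{\mu\nu}$ (such as $T^{\mu\nu} = T^{\nu\mu}$) is required. If one additionally wants the collection $F^{\mu\nu}_{\rho}$ to transform as a tensor under change of frame, one notes that Theorem~\ref{thm:integral-formula} fixes each $F^\nu$ uniquely once boundary data is prescribed covariantly, so a covariant choice of boundary conditions for all $\nu$ at once delivers a genuine tensor; absent that, the statement is correctly phrased as existence ``for each $T^\nu$.'' As with Theorem~\ref{thm:conserved-currents}, flatness is used only through Theorem~\ref{thm:integral-formula}, so the result would likewise extend to curved spacetimes if Theorem~\ref{thm:dual-strong} does.
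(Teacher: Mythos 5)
Your proposal is correct and matches the paper's intent exactly: the paper gives no separate proof, merely noting that Theorem~\ref{thm:conserved-currents} (via Theorem~\ref{thm:dual-strong}) ``applies directly'' to each vector field $T^\nu = \gamma_\mu T^{\mu\nu}$ with $\nu$ an inert label, which is precisely your reduction. Your extra remarks on frame-covariance of the family $\{F^\nu\}$ and the absence of any symmetry assumption on $T^{\mu\nu}$ go slightly beyond the paper but introduce no divergence in method.
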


Conserved quantities of this form include the gravitational stress energy tensor, where $F^\nu$ plays the role of a gravitational superpotential. Whether there are obstructions to applying the lemma to the conserved currents of Yang-Mills theory is left as an open question and is of particular interest, given the close structural relationship between Yang-Mills equations and Maxwell's equations.

\section{Discussion}

We have presented two main results: a strong form of the Poincar\'e lemma (Theorem~\ref{thm:strong}) and its application to conserved currents (Theorem~\ref{thm:conserved-currents}). While these results are restricted to flat manifolds, we've presented conditions under which they generalize to curved manifolds --- namely, the existence of antiderivatives for fields with respect to the covariant derivative $D$.

Note that the usual Poincar\'e lemma (Theorem~\ref{thm:fields}) doesn't tell us that a conserved current $J$ is exclusively the divergence of a bivector, just that there exists such a bivector. In the same way, the strong Poincar\'e lemma presented here doesn't tell us that an antiderivative $F$ \emph{must} satisfy Maxwell's equations, just that there exists such an $F$. In this view, the result may not be particularly surprising to physicists for the following reason.

Consider the more common situation of applying the Poincar\'e lemma to the electromagnetic field $F$. We use the fact that $D \wedge F = 0$ to determine the existence of a potential $A$ satisfying $F = D \wedge A$. Here, we assume that we can choose the Lorenz gauge $D \cdot A = 0$ --- it is a gauge \emph{freedom} after all. Theorem~\ref{thm:strong} tells us that there is no obstruction to making this choice.

In a certain sense, $D \wedge F = 0$ is also a gauge choice, insofar as the current density $J$ is concerned. That is, adding a divergence-free bivector to $F$ does not change the physical content of $J$, so in this way it is like a gauge transformation with respect to $J$.

On the other hand, this lemma means that Maxwell's equation in electrodynamics can be understood as expression of local charge conservation. It still carries the freedom to admit magnetic sources, but Maxwell's equation with both electric and magnetic sources can always be decoupled into a pair of Maxwell equations---one for electric sources and one for magnetic sources---so long as they are independently conserved.

Of course, Theorem~\ref{thm:conserved-currents} tells us nothing of the \emph{dynamics} of electromagnetism, since it tells us nothing of the force exerted by the field on the current. However, it does tell us that the force law is what distinguishes different electromagnetic theories in trivial topologies and helps us to know that topological theories of electrodynamics are fundamentally distinct, only resembling the classical theory locally.

As such, this result is helpful in guiding investigations of extensions to electrodynamics. For instance, some equations may appear to be generalizations of Maxwell's equations, but are not. Consider $\nabla F + \nabla \chi = J$, where $\chi$ is a scalar field satisfying $\nabla^2 \chi = 0$, as seen in Ref. \cite{dvoe} with $J=0$. The strong lemma tells us that solutions to this equation have corresponding solutions to Maxwell's equations, so are not generalizations but rather are reparameterizations.

Moreover, this result applies to any conserved current and may offer insight into theories beyond electrodynamics. Corresponding to any locally conserved quantity---and thus, by Noether's theorem, to any continuous symmetry---is a field satisfying Maxwell's equation.

Moreover, there are many theories which utilize analogies to electromagnetism. Theorem~\ref{thm:dual-strong} may offer formal grounding for such analogies. Consider for instance the gravitoelectromagnetic approach to linearized gravity\cite{mashhoon} which has been used to visualize the dynamics of merging black hole binary systems in the Simulating Extreme Spacetimes (SXS) project\cite{owen}.

In viewing Maxwell's equations as an expression of a conserved quantity, it is less surprising that these analogies between electromagnetism and general relativity exist, and perhaps also less mysterious that Yang-Mills theories have structure so closely resembling that of electrodynamics. These analogies are not coincidental. They arise naturally from the structure of locally conserved quantities.

\section*{Acknowledgements}

I am grateful to Justin Dressel for constructive conversations and important corrections, as well as the organizers of AGACSE 2018 for a wonderful conference.

\Urlmuskip=0mu plus 1mu
\bibliographystyle{plain}
\bibliography{main}

\end{document}